\titleformat*{\section}{\large\bfseries}
\titleformat*{\subsection}{\normalfont\bfseries}
\titleformat*{\subsubsection}{\normalfont\itshape\bfseries}
\newcommand\bluesout{\bgroup\markoverwith{\textcolor{blue}{\rule[0.5ex]{2pt}{0.8pt}}}\ULon}
\renewenvironment{proof}[1][\proofname]{%
	\par
	\pushQED{\qed}%
	\normalfont \topsep6\p@\@plus6\p@\relax
	\trivlist
	\item[\hskip\labelsep
	\small\bfseries
	#1\@addpunct{.}]\ignorespaces
}{%
	\popQED\endtrivlist\@endpefalse
}
\numberwithin{equation}{section}
\newtheorem{theorem}{Theorem}
\newtheorem{lemma}{Lemma}
\newtheorem{proposition}{Proposition}
\begin{document}
	
	\title{\centering 
		{Agentic AI and Hallucinations}
	}
	\bigskip 
	\bigskip
	\bigskip
	\author[,1]{Engin Iyidogan\thanks{Corresponding author.\\ Email addresses:
			\url{engin.iyidogan@skema.edu} (E. Iyidogan),
			\url{ali.ozkes@skema.edu} (Ali I.~Ozkes)}}
	\author[1,2]{Ali I.~Ozkes}
	
	\affil[1]{SKEMA Business School – Université C\^ote d’Azur, Paris, France}
	\affil[2]{WU Vienna University of Economics and Business, Vienna, Austria}

	\bigskip
	\bigskip
	\date{\today}
	
	\maketitle
	
	\thispagestyle{empty}
	\bigskip
	
	\begin{abstract}
		
		We model a competitive market where AI agents buy answers from upstream generative models and resell them to users who differ in how much they value accuracy and in how much they fear hallucinations. Agents can privately exert effort for costly verification to lower hallucination risks. Since interactions halt in the event of a hallucination, the threat of losing future rents disciplines effort. A unique reputational equilibrium exists under nontrivial discounting. The equilibrium effort, and thus the price, increases with the share of users who have high accuracy concerns, implying that hallucination-sensitive sectors, such as law and medicine, endogenously lead to more serious verification efforts in agentic AI markets.

	\end{abstract}

	\vfill
	
	\noindent {\bf JEL classification:} D82, L14, L15, C73 \\
	\noindent {\bf Keywords:} Agentic AI, Artificial Intelligence, Hallucination Risk, Large Language Models\\
	
	\newpage

	\section{Introduction}
	\label{sec:introduction}

	Many industries increasingly rely on AI agents \citep{ftAgentsFrom, brynjolfsson2025generative}. These software entities intermediate between upstream general-purpose model providers (\textit{e.g.}, large language models, LLMs)  and end users by converting queries into structured prompts, routing them to one or several models, and post‑processing the returned completions \citep{rothschild2025agentic}. Their economic role resembles that of traditional data vendors and consultancy firms, yet two technological facets distinguish them. First, the upstream supply side offers a discrete menu of models that differ markedly in both their per‑query price and their output quality, measured for instance by the baseline hallucination rate, the rate at which the model falsely claims to have performed an action or generated a correct response \citep{dahl2024large, canayaz2025ai}. Second, downstream agents can apply various techniques to reduce hallucinations, such as retrieval‑augmentation, ensembling, or human‑in‑the‑loop verification \citep{shavit2023practices}. In other words, AI agents choose two quality levers: which model to call and how much effort to expend in verification.

	Agents' quality choices determine the equilibrium mapping from retail price to delivered reliability. When hallucination risk is not perfectly internalized, the standard welfare theorems break down. In high‐stakes domains such as law or medicine, this becomes vitally important \citep{asgari2025framework}. For such users, the disutility from an incorrect answer exceeds by orders of magnitude the service’s sticker price. Users with lower sensitivity to such failures, such as those who use the service for entertainment, may rather treat hallucinations as a nuisance. This raises the question: Can competition and relational contracting alone induce AI agents to verify answers to the standard demanded by high-criticality users, and how does the equilibrium adjust to the mix of high- versus low-criticality demand?

	We develop a tractable discrete-time, infinitely-repeated model in which competitive AI agents pick the wholesale model, verification effort, and retail price. Users differ both in the value they place on a correct answer and in their aversion to hallucinations. Relational contracts make costly verification incentive-compatible: the threat of losing future rents disciplines the agent and resolves the moral-hazard problem. Our model yields testable predictions on how market composition shapes agentic AI service quality and pricing.
	
	The verification mechanism we study is akin to relational‑incentive contracts \citep{levin2003relational} in that costly, non‑contractible effort is enforced by the threat of future lost surplus. In our study, that bilateral logic is embedded in a competitive market whose users differ in their tolerance for hallucinations, so the agent’s informational rent is passed through to prices. Our paper also complements the allocation‑of‑authority framework by  \citet{athey2020allocation}. While they study whether a principal should delegate decision rights to an AI or retain them in‑house, we take AI authority as given and analyze the verification effort exerted by a profit‑maximizing operator.

	\section{The Model}
	\label{sec:model}
	We develop a dynamic economy that features \textit{(i)} a unit mass of users with heterogeneous valuation and hallucination aversion, \textit{(ii)} a perfectly competitive continuum of (AI) agents who assemble information services, and \textit{(iii)} a finite set of upstream providers of models (\textit{e.g.}, LLMs). We model the interaction as an infinitely repeated game where uncertainty arises from the stochastic occurrence of hallucinations.
	
	More precisely, users and agents interact in a competitive market, choosing from a menu of services offered by model providers. Time is discrete, and all parties share a common discount factor $\delta \in [0,1)$. Each user is endowed with a privately observed type $\theta\in\{H\text{(igh)},L\text{(ow)}\}$, which governs the utility $v(\theta)>0$ from a correct answer and the disutility $\alpha(\theta)>0$ from a hallucinated answer. We impose $v(H)>v(L)$ and, crucially, $\alpha(H)>\alpha(L)$ to capture the heightened stakes of the segment with high-type users. The population share of $H$ is $\mu\in(0,1)$.
	
	Each agent is a long-lived firm that chooses a model $m\in\mathcal{M}$, where $\mathcal{M}$ is publicly known to be finite, to query and a retail price $p$ to post. Although $m$ and $p$ stay the same throughout, the agent chooses its verification effort level in each period, $e\in\mathbb{R}_{\ge0}$. Model $m$ carries a wholesale fee $k_m>0$ per call and a baseline hallucination probability $h_0(m)\in(0,1)$. We assume $m\neq m'$ implies $k_m\neq k_{m'}$ and $h_0(m)\neq h_0(m')$.
	
	Verification effort, chosen by the agent when interacting with a user, lowers the hallucination probability exponentially:
	\begin{equation}\label{eq:halprob}
		h(m,e)\;=\;h_0(m)\,\exp(-\beta e),
	\end{equation}
	where $\beta>0$ denotes verification efficacy. Effort induces a convex cost $c(e)$ satisfying $c(0)=0$, $c'(e)>0$, and $c''(e)>0$ for all $e>0$.
	
	In any interaction, agents first post long-term relational contracts, specified by a model $m$ and a price per-period $p$. Users observe the available contracts and choose one to enter a long-term relationship. Then, in each subsequent period, the user pays $p$ and the agent chooses an unobservable effort $e$. After the service is delivered, a hallucination may occur with probability $h(m, e)$. If a hallucination occurs, the user observes this poor outcome, the relationship terminates, and the user exits the market. If no hallucination occurs, the relationship continues to the next period. In each period, a mass of new, uninformed users enters the market, equal to the mass of users who exited in the previous period, keeping the total population constant.
	
	A user of type $\theta$ receives expected per-period utility 
	\[
	U_{\theta}(p,m,e) = (1-h(m,e))v(\theta) - h(m,e)\alpha(\theta) - p.
	\]
	The total lifetime value to the user from entering the relationship is the expected discounted sum of these per-period utilities:
	\begin{equation}
		V_{\theta}(p,m,e) = \frac{U_{\theta}(p,m,e)}{1-\delta(1-h(m,e))}.
	\end{equation}
	The user commits to the relationship if and only if their expected lifetime value is non-negative, $V_{\theta}(p,m,e) \ge 0$.
	
	We analyze equilibria where agents offer a single, unified contract. In equilibrium, this contract must be self-sustaining for the agent and acceptable to all participating user types. An agent's commitment to effort $e$ is sustained by the value of the ongoing relationship. The incentive compatibility constraint is then
	\begin{equation}\label{eq:ic_constraint}
		c(e) \le \delta[h(m,0) - h(m,e)] V_C,
	\end{equation}
	where $V_C$ is the agent's present value of relational rents, such that
	\[V_C = \frac{p - k_m - c(e)}{1-\delta(1-h(m,e))}.\] 
	Competition leads agents to lower the price $p$ until this constraint binds, which yields the minimum price required to sustain effort $e$:
	\begin{equation}\label{eq:price_enforce}
		p(m,e) = k_m + c(e) \left[ 1 + R(m,e) \right],
	\end{equation}
	where $$R(m,e)= \frac{1 - \delta(1-h(m,e))}{\delta(h(m,0)-h(m,e))},$$ such that $c(e)R(m,e)$ denotes the agent's per-period markup (or rent). An equilibrium contract is one that maximizes average user lifetime value, subject to the enforcement price \eqref{eq:price_enforce} and the participation constraint, $V_\theta \ge 0$, for $\theta\in\{H,L\}$.
	
	\section{Results}
	\label{sec:reputation}
	We analyze the market outcomes under relational contracting, focusing on how the value of reputation enables verification effort in a market with heterogeneous users.
	
	\subsection{Benchmark: The Spot Market Outcome}
	We first consider the benchmark where future interactions have no value, \textit{i.e.}, $\delta=0$. For brevity, we let
	\( U_H(m,p,e) \equiv U_{\theta=H}(p,m,e) \) (and \( U_L(m,p,e)\) analogously).

	\begin{proposition}
		If $\delta=0$, the only sustainable effort level is $e^*=0$. Competitive agents offer the contract $(m^*, p^*)$ where $p^*=k_{m^*}$ and the model $m^*$ is the one that maximizes the $\mu$-weighted average of user utilities:
		\[
		m^* = \arg\max_{m \in \mathcal{M}} \left\{ \mu U_H(m, k_m, 0) + (1-\mu)U_L(m, k_m, 0) \right\}
		\]
		The market is active only if this contract is acceptable to both user types.
	\end{proposition}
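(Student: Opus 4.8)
The plan is to proceed in four steps that mirror the four claims, exploiting the fact that $\delta=0$ collapses every forward-looking object in the model. First I would pin down effort. Setting $\delta=0$ in the incentive-compatibility constraint \eqref{eq:ic_constraint} makes its right-hand side $\delta[h(m,0)-h(m,e)]V_C$ vanish identically, so the constraint reduces to $c(e)\le 0$. Since $c(0)=0$ together with $c'(e)>0$ for $e>0$ imply $c(e)>0$ for every $e>0$, the only effort satisfying the constraint is $e^*=0$. The same conclusion follows from the underlying economics: a myopic agent's within-period payoff $p-k_m-c(e)$ is strictly decreasing in $e$ because effort affects only the continuation value, which is worthless at $\delta=0$; hence cost minimization forces $e^*=0$. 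This is the moral-hazard breakdown the benchmark is meant to isolate.

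Second I would determine the price. Here I would \emph{not} invoke the enforcement-price formula \eqref{eq:price_enforce}, since $R(m,e)$ is derived from a binding IC with $\delta>0$ and is singular at $e=0$, where the denominator $\delta(h(m,0)-h(m,e))$ is zero. Instead, with $e^*=0$ and $c(0)=0$, the agent's per-period profit on any accepted contract is simply $p-k_m$. Free entry and price competition then drive this profit to zero, yielding $p^*=k_{m^*}$ directly.

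Third I would select the model and check activity jointly. Because $\delta=0$ gives $V_\theta(p,m,e)=U_\theta(p,m,e)$, maximizing average user lifetime value coincides with maximizing the $\mu$-weighted per-period utility. Restricting attention to unified (pooling) contracts and using a Bertrand-style argument — an agent offering a model with strictly higher average utility at the zero-profit price $p=k_m$ would attract the entire market — the equilibrium model must solve $m^*=\arg\max_{m}\{\mu U_H(m,k_m,0)+(1-\mu)U_L(m,k_m,0)\}$. Activity then follows from the participation constraints: since $V_\theta=U_\theta$ at $\delta=0$, the requirement $V_\theta\ge 0$ for $\theta\in\{H,L\}$ becomes $U_H(m^*,k_{m^*},0)\ge 0$ and $U_L(m^*,k_{m^*},0)\ge 0$, and the unified contract is operated if and only if both hold.

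I expect the main obstacle to lie in the second and third steps rather than the first. One must take care that the enforcement-price machinery is genuinely inapplicable at $e=0$ and derive the price from the competitive zero-profit condition instead, avoiding the $0\cdot\infty$ indeterminacy in $c(e)R(m,e)$. One must also justify the equilibrium selection of the average-welfare-maximizing model from the free-entry/competitive structure rather than simply positing it, which requires making the pooling restriction and the undercutting logic explicit. By contrast, the effort and participation steps are essentially immediate once $\delta=0$ is substituted.
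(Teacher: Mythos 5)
Your proposal is correct and follows essentially the same route as the paper's proof: the IC constraint at $\delta=0$ forces $e^*=0$, zero-profit competition pins $p^*=k_{m^*}$, and the need to attract users selects the model maximizing the $\mu$-weighted average utility, with participation reducing to $U_\theta(m^*,k_{m^*},0)\ge 0$. Your added observation that the enforcement-price formula is singular at $e=0$ and must be replaced by the direct zero-profit argument is a sensible refinement that the paper leaves implicit.
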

	\begin{proof}
		When $\delta=0$, the IC constraint \eqref{eq:ic_constraint} implies $e^*=0$. An agent's cost thus equals $k_m$, and competition forces the price $p^*=k_m$. To attract users, agents must offer a contract that provides the highest possible average utility to the market, thus solving for $m^*$ as defined in the proposition.
	\end{proof}
	This benchmark highlights that without the shadow of the future, agents have no incentive to exert costly verification effort, and the quality of service provision regresses to its minimum possible level, $e=0$. Competition can only occur over the observable dimensions of baseline model quality $h_0(m)$ and its associated cost $k_m$, demonstrating that a mechanism like reputation is essential to solving the underlying moral hazard problem.

	\subsection{The Reputational Equilibrium}
	When $\delta > 0$, agents can exert positive effort. 
	An equilibrium contract must maximize the average user's lifetime value, subject to the enforcement price (\ref{eq:price_enforce}). This is equivalent to solving a planner's problem that maximizes the total surplus allocated to users, net of the incentive rents required by the agent ($c(e)R(m,e)$ discounted). Let $v_{avg}(\mu) = \mu v(H) + (1-\mu) v(L)$ and $\alpha_{avg}(\mu) = \mu \alpha(H) + (1-\mu) \alpha(L)$. The participation constraint becomes $V_\theta(m,e)\ge0$, where $$V_\theta(m,e) =\frac{(1-h(m,e))v(\theta)-h(m,e)\alpha(\theta)-p(m,e)}{1-\delta(1-h(m,e))},$$
	for $\theta\in\{H,L\}$. The objective is:
	\begin{equation}\label{eq:welfare_objective_general}
		\max_{m,e\ge0} \quad W(m,e) = \frac{(1-h(m,e))v_{avg}(\mu)-h(m,e)\alpha_{avg}(\mu)-k_m-c(e)}{1-\delta(1-h(m,e))} - \frac{c(e)R(m,e)}{1-\delta(1-h(m,e))},
	\end{equation}
	subject to participation constraints. To solve this, we first identify which constraint is binding in relation to the equilibrium hallucination probability $h^*(m,e)$. 
	
	\begin{lemma}
		\label{lem:binding_general}
		Let $\kappa = \frac{\alpha(H)-\alpha(L)}{v(H)-v(L)}$ be the sensitivity ratio.
		\begin{enumerate}
			\item[(i)] $h^*(m,e) > {1}/({1+\kappa}) \implies V_H(p,m,e) < V_L(p,m,e)$, thus, $V_H \ge 0$ is binding.
			\item[(ii)] $h^*(m,e) < {1}/({1+\kappa}) \implies V_H(p,m,e) > V_L(p,m,e)$, thus, $V_L \ge 0$ is binding.
		\end{enumerate}
	\end{lemma}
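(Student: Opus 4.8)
The plan is to compute the difference $V_H - V_L$ directly and read off its sign from the threshold $1/(1+\kappa)$. Two structural features make this immediate: under a single unified contract both types face the \emph{same} price $p$ and the \emph{same} hallucination probability $h(m,e)$, so the two value functions share both a common numerator price-term and a common denominator $1-\delta(1-h(m,e))$. Thus the comparison reduces to comparing the type-dependent pieces of the numerators.

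First I would verify that the common denominator is strictly positive. Since $\delta\in[0,1)$ and $h(m,e)\in(0,1)$, we have $\delta(1-h(m,e))<1$, so $1-\delta(1-h(m,e))>0$. Consequently the sign of $V_H-V_L$ agrees with the sign of the difference of their numerators. Second, I would subtract the numerators: the $-p(m,e)$ terms cancel, leaving
\[
(1-h(m,e))\bigl[v(H)-v(L)\bigr] - h(m,e)\bigl[\alpha(H)-\alpha(L)\bigr].
\]
Factoring out $v(H)-v(L)>0$ and using the definition of $\kappa$ rewrites this as $\bigl(v(H)-v(L)\bigr)\bigl[(1-h(m,e)) - \kappa\,h(m,e)\bigr]$, which is positive exactly when $1-(1+\kappa)h(m,e)>0$, i.e.\ when $h(m,e)<1/(1+\kappa)$, and negative when $h(m,e)>1/(1+\kappa)$.

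Finally I would translate the sign back into the binding constraint. Evaluating at the equilibrium probability $h^*(m,e)$: in case (i), $h^*>1/(1+\kappa)$ yields $V_H<V_L$, so among the two participation constraints $V_H\ge0$ is the tighter (smaller) one and therefore binds; in case (ii), $h^*<1/(1+\kappa)$ yields $V_H>V_L$, so $V_L\ge0$ binds. The one genuine point to flag—rather than a real obstacle—is that both inequalities rest on a common price, which is precisely the unified-contract assumption under which the lemma is stated; once the positivity of the denominator is noted, the result is an algebraic identity. I do not anticipate any hard step, only the bookkeeping of confirming the price cancellation and the denominator sign.
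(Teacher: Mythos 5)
Your proposal is correct and follows essentially the same route as the paper: the paper likewise observes that $V_H$ and $V_L$ share the common positive denominator $1-\delta(1-h(m,e))$, so their ordering reduces to that of the per-period utilities, and the price terms cancel to give the threshold $1/(1+\kappa)$. Your write-up just makes the denominator-positivity and price-cancellation steps explicit.
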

	\begin{proof}
		The ordering of $V_H$ and $V_L$ is determined by the ordering of the per-period utilities $U_H$ and $U_L$. We have  $U_L \ge U_H$ if and only if 
		\[h(m,e)[\alpha(H)-\alpha(L)] \ge (1-h(m,e))[v(H)-v(L)],\] 
		which yields the threshold ${1}/({1+\kappa})$. The two constraints coincide when $h^*(m,e) = {1}/({1+\kappa})$.
	\end{proof}
	
	This leads to our main result stated in Theorem \ref{prop:reputation_market_general}, which characterizes the equilibrium for nontrivial discount factors. We first note that an active equilibrium requires that the participation constraint for the binding users, \textit{i.e.,} the $L$ types, is met. This requires that the per-period utility is sufficient to cover the agent's rent, \textit{i.e.}, $(1-h)v_L - h\alpha_L - k_m - c(e) \ge c(e)R(m,e)$. Substituting the expression for the rent term $R(m,e)$ and solving for $\delta$ yields the condition $\delta \ge \underline{\delta}(m,e)$, where \[\underline{\delta}(m,e) := \left( (1-h) + (h_0-h) \frac{(1-h)v_L-h\alpha_L-k_m-c(e)}{c(e)} \right)^{-1}.\] Thus, an active market is only sustainable if players are sufficiently patient.

	\begin{theorem}
		\label{prop:reputation_market_general}
An active relational equilibrium contract exists if and only if $\delta \ge \underline{\delta}(m^*, e^*)$, where the equilibrium is characterized by the unique pair $(m^* , e^* )$. The equilibrium verification effort $e^*$ is strictly increasing in the share of high-type users, $\mu$.	
\end{theorem}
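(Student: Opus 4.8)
The plan is to handle the three claims---existence, uniqueness, and monotonicity in $\mu$---separately, after one simplification that drives everything: since $\frac{c(e)R(m,e)}{1-\delta(1-h)}=\frac{c(e)}{\delta(h_0(m)-h)}$, the rent term collapses and $\mu$ enters the objective $W(m,e)$ only through $v_{avg}(\mu)$ and $\alpha_{avg}(\mu)$ in the numerator of the first fraction. Throughout I write $h=h(m,e)=h_0(m)e^{-\beta e}$, so that $\partial h/\partial e=-\beta h$.

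For uniqueness I would first fix the model $m$ and show the stationarity condition $\partial W/\partial e=0$ has a single interior root. The marginal cost of effort rises (as $c''>0$), while the marginal reliability gain has diminishing returns because $h$ falls exponentially in $e$; hence marginal benefit crosses marginal cost exactly once, giving a unique $e^*(m)$. Because $\mathcal{M}$ is finite and the pairs $(k_m,h_0(m))$ are distinct across models, the upper envelope $\max_m W(m,e^*(m))$ is attained at a single $m^*$ outside knife-edge ties, which yields the unique pair $(m^*,e^*)$. I expect establishing the single-crossing (equivalently, strict quasiconcavity of $W(m,\cdot)$) to be the chief technical obstacle, since $W$ is a ratio of nonlinear functions of $e$ and need not be globally concave; the change of variables $x=h\in(0,h_0(m)]$, under which $c(e)$ is a convex function of $\ln(1/x)$, should render the comparison of the marginal terms tractable.

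For existence I would lean on the reduction preceding the statement. In the relevant regime $h^*<1/(1+\kappa)$, Lemma \ref{lem:binding_general} identifies the $L$ type as the marginal participant, so the optimal contract is actually traded if and only if $V_L(m^*,e^*)\ge0$. Rearranging that inequality for the discount factor returns precisely $\delta\ge\underline{\delta}(m^*,e^*)$. The ``if'' direction is verification: when the inequality holds, $(m^*,e^*)$ is feasible and, being the $W$-maximizer, is the competitive relational contract. The ``only if'' direction uses that competition forces any active equilibrium onto the $W$-maximizing feasible contract, so no active market survives once $V_L(m^*,e^*)<0$.

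The comparative static is the cleanest step. Because $p(m,e)$ is pinned by the agent's incentive constraint and is independent of $\mu$, the feasible set $\{(m,e):V_H\ge0,\ V_L\ge0\}$ does not move with $\mu$, so it suffices to show $W$ has strictly increasing differences in $(e;\mu)$. Differentiating gives $\partial W/\partial\mu=[(1-h)\Delta v-h\Delta\alpha]/[1-\delta(1-h)]$ with $\Delta v=v(H)-v(L)>0$ and $\Delta\alpha=\alpha(H)-\alpha(L)>0$; differentiating again in $e$ and using $\partial h/\partial e=-\beta h$, the $\delta h$ terms cancel to leave
\[
\frac{\partial^2 W}{\partial e\,\partial\mu}=\frac{\beta h\left[\Delta v+(1-\delta)\Delta\alpha\right]}{\left(1-\delta(1-h)\right)^2}>0 .
\]
By the monotone-selection theorem for strictly increasing differences (Topkis; Edlin--Shannon for the strict conclusion), the optimal interior effort is strictly increasing in $\mu$ for each fixed model. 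The last loose end is the discrete model margin: $m^*$ is locally constant in $\mu$ except at isolated switch points, and I would verify that $e^*$ does not drop at a switch, so that the selected effort inherits strict monotonicity. This model-switching bookkeeping, alongside the quasiconcavity needed for uniqueness, is where the genuine work lies.
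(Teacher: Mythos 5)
Your proposal is correct in substance and its core computation is exactly the paper's: the same reduction of the rent term to $c(e)/[\delta(h_0-h)]$, the same participation-constraint rearrangement yielding $\delta\ge\underline{\delta}$, and the identical cross-partial $\partial^2W/\partial e\,\partial\mu=\beta h[\Delta v+(1-\delta)\Delta\alpha]/(1-\delta(1-h))^2>0$. The one methodological difference is the last step: the paper applies the implicit function theorem, writing $de^*/d\mu=-(\partial^2W/\partial e\,\partial\mu)/(\partial^2W/\partial e^2)$ and signing the denominator via the second-order condition, whereas you invoke Topkis/Edlin--Shannon. Your route is marginally more robust (it signs the comparative static without needing $\partial^2W/\partial e^2<0$ at the optimum), but it buys little here since uniqueness of $e^*(m)$ requires strict quasiconcavity anyway --- which the paper secures by simply assuming $c''$ is ``sufficiently large,'' while you flag it as the main technical obstacle and sketch a single-crossing argument. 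Note also that the two loose ends you honestly identify are not resolved by the paper either: the proof there fixes $m$ throughout and never addresses the discrete model margin, even though the paper's own numerical example (Figure 3) exhibits a switch from the riskier to the cleaner upstream model at $\mu\approx0.26$, precisely the situation in which $e^*$ could jump downward. So your ``bookkeeping at switch points'' is not pedantry --- it is a genuine gap in the theorem's global monotonicity claim that the published proof papers over; strictly speaking both arguments establish $de^*/d\mu>0$ only conditional on a fixed $m^*$.
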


	\begin{proof}
		The equilibrium $(m^*, e^*)$ solves \ref{eq:welfare_objective_general} whenever $\delta \ge \underline{\delta}(m^*, e^*)$. For given $m$, the first-order condition defines an interior solution $e^*(m)$ and precisely formulates the trade-off between generated surplus and agent's markup such that
		\begin{equation}
			\frac{d}{de}\left( \frac{(1-h)v_{avg}(\mu)-h\alpha_{avg}(\mu)-k_m-c(e)}{1-\delta(1-h)} \right) = \frac{d}{de}\left( \frac{c(e)}{\delta(h_0-h)} \right),
		\end{equation}
		where $h=h(m,e)$ and $h_0=h_0(m)$. Uniqueness is guaranteed if $\partial^2 W/\partial e^2 < 0$, which holds if $c''(e)$ is sufficiently large. Differentiating the first-order condition with respect to $\mu$, we have
		$$\frac{de^*}{d\mu}=-\frac{ {\partial^2 W}/{\partial e \partial \mu} }{{\partial^2 W}/{\partial e^2 }}.$$ The denominator is negative by the second-order condition. The cross-partial derivative becomes:
		\[
		\frac{\partial^2 W}{\partial e \partial \mu} = \frac{\beta h \left( (v_H-v_L) + (1-\delta)(\alpha_H-\alpha_L) \right)}{\big(1-\delta(1-h)\big)^2},
		\]
		which is positive as $v_H > v_L$, $\alpha_H > \alpha_L$, $\beta, h > 0$, and $\delta \in (0,1)$. It follows that ${de^*}/{d\mu} > 0.$
	\end{proof}

	\noindent Our main contributions are twofold. First, the model highlights the trade-off between a model's baseline hallucination risk, $h_0(m)$, and an agent's verification effort, $e$, made visible in both the pricing and equilibrium conditions. Second, the comparative static finding $de^*/d\mu>0$ delivers a sharp exposition: market composition itself acts as a disciplinary device, in that sectors with higher composition of high-criticality users, such as law and medicine, endogenously generate more heavily verified and consequently more expensive AI services.

	\subsection{Numerical Example}
	\label{sec:numerics}
	
	We conclude our analysis with a numerical example. Low-type users gain \(v(L)=1\) from a correct answer and incur \(\alpha(L)=1.5\) if the answer is hallucinated; high-type users gain \(v(H)=3\) and lose \(\alpha(H)=10\). Verification lowers hallucination risk according to \eqref{eq:halprob} with efficacy \(\beta=0.70\) and incurs quadratic cost \(c(e)=\frac{e^{2}}{8}\).  
	The discount factor is set to \(\delta=0.95\). The baseline upstream model (A) features a hallucination probability \(h^A_{0}=0.20\) and wholesale fee \(k_A=0.05\). A second model, Model B, is such that \(h^B_{0}=0.13\) and \(k_B=0.30\).

	Figure~\ref{fig:effort_delta} shows that greater patience relaxes the incentive-compatibility constraint, shifting the whole \(e^{*}(\mu)\) schedule upward, while effort is strictly increasing in \(\mu\) as established in Theorem~\ref{prop:reputation_market_general}.  Verification plateaus once \(e^{*}\approx1.97\), where the hallucination rate falls below 0.02.

	\begin{figure}[h]
		\centering
		\begin{subfigure}[t]{0.48\textwidth}
			\centering
			\includegraphics[width=\textwidth]{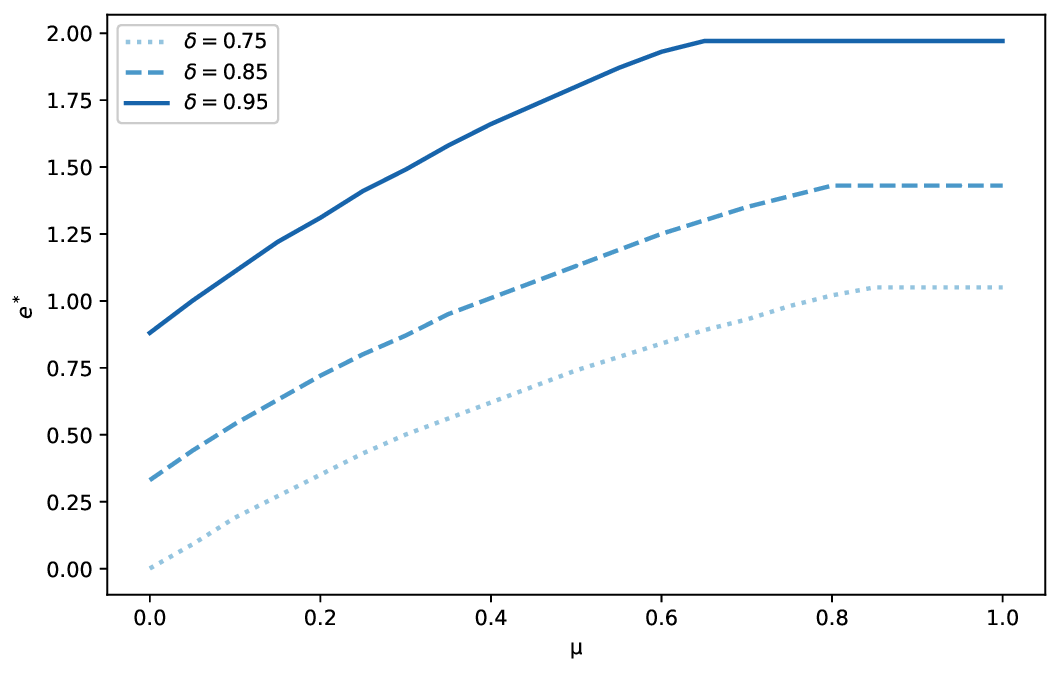}
			\caption{Patience and verification effort.%
				\newline\small Equilibrium effort \(e^{*}\) as a function of
				the share of high-type users \(\mu\) for three discount
				factors (\(\delta=0.75,0.85,0.95\)).}
			\label{fig:effort_delta}
		\end{subfigure}
		\hfill
		\begin{subfigure}[t]{0.48\textwidth}
			\centering
			\includegraphics[width=\textwidth]{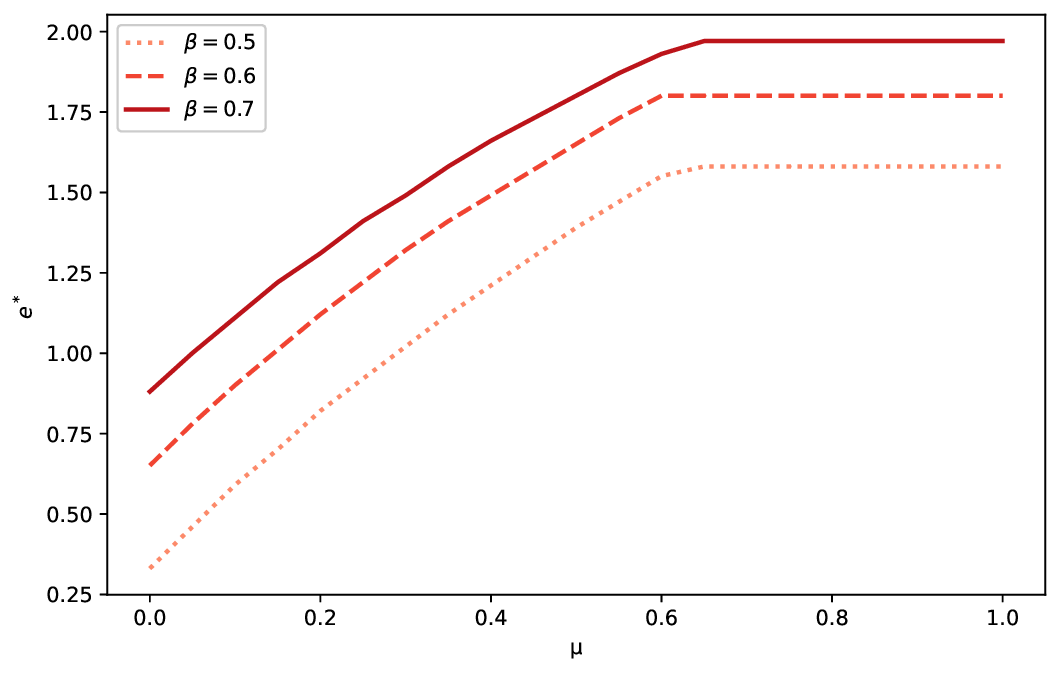}
			\caption{Verification-technology comparative static.%
				\newline\small Equilibrium effort \(e^{*}\) as a function of
				\(\mu\) for three verification efficacies
				(\(\beta=0.50,0.60,0.70\)).}
			\label{fig:effort_beta}
		\end{subfigure}
		
		\caption{Equilibrium verification effort under varying discount
			factors and verification efficacies.}
		\label{fig:effort_composite}
	\end{figure}

	Figure~\ref{fig:effort_beta} shows that more effective screening technology lifts the \(e^{*}(\mu)\) schedule, allowing agents to reach the quality ceiling at lower \(\mu\).  The result follows from the effort efficacy as in \eqref{eq:halprob}.


	Figure~\ref{fig:welfare_switch} illustrates the upstream model choice. For \(\mu<0.26\) agents rely on the cheaper but riskier model and compensate with higher verification effort. When the high-type composition exceeds roughly one-fourth of the market, paying the higher wholesale fee for the cleaner model maximizes surplus, and quality provision shifts from the retail to the upstream layer.

	\begin{figure}[h]
		\begin{center}
			\includegraphics[width=0.5\textwidth]{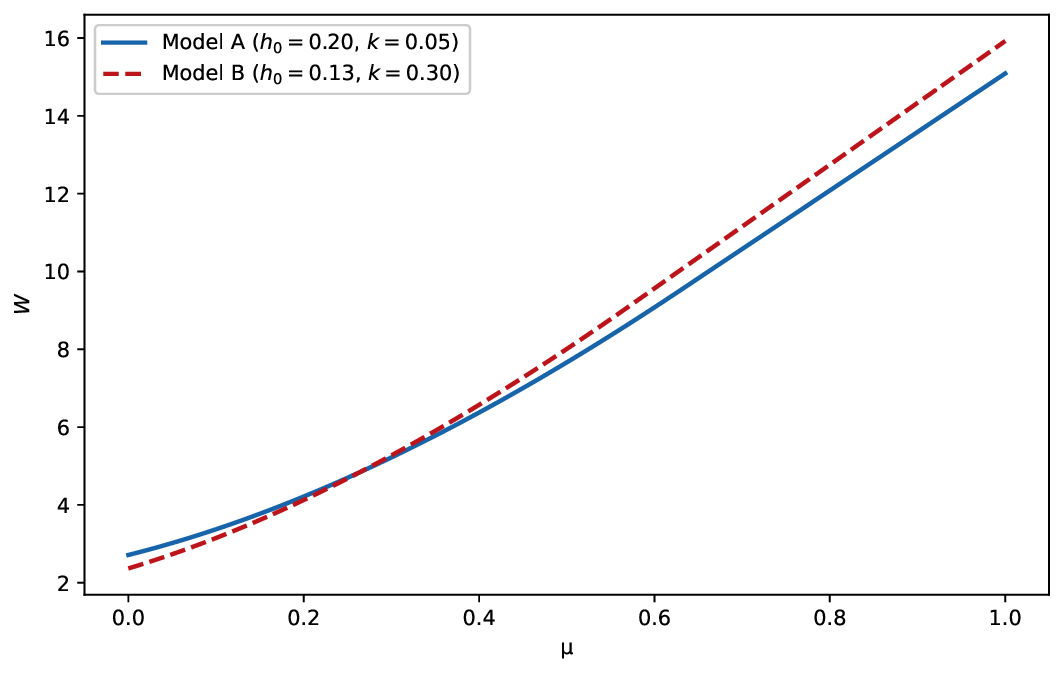}
		\end{center}
		\caption{\textit{Upstream model choice.} Notes: Total welfare \(W\) under Model A (\(h^A_{0}=0.20,k_A=0.05\), solid) and Model B (\(h^B_{0}=0.13,k_B=0.30\), dashed).}
		\label{fig:welfare_switch}
	\end{figure}
	
	
	\newpage
	\bibliographystyle{chicago}
	\bibliography{reference_ai_agents}

\end{document}